\def\UseBibLatex{1}%
\providecommand{\BibLatexMode}[1]{}
\providecommand{\BibTexMode}[1]{#1}
  \renewcommand{\BibLatexMode}[1]{}
  \renewcommand{\BibTexMode}[1]{#1}
  \renewcommand{\BibLatexMode}[1]{#1}
  \renewcommand{\BibTexMode}[1]{}
\newcommand{\kentdelete}[1]{}
\newcommand{\IfPrinterVer}[2]{#2}%
\providecommand{\Mh}[1]{{{#1}}}%
   \renewcommand{\IfPrinterVer}[2]{#1}%
   \renewcommand{\Mh}[1]{{\textcolor{red}{#1}}}%
\newcommand{\etal}{\textit{et~al.}\xspace}
\newlength{\savedparindent}
\newcommand{\SaveIndent}{\setlength{\savedparindent}{\parindent}}
\newcommand{\RestoreIndent}{\setlength{\parindent}{\savedparindent}}
\definecolor{blue25}{rgb}{0, 0, 11}
\newcommand{\emphic}[2]{%
  \textcolor{blue25}{%
    \textbf{\emph{#1}}}%
  \index{#2}}
\renewcommand{\emphic}[2]{\textbf{\emph{#1}}}
\newcommand{\emphi}[1]{\emphic{#1}{#1}}
\newcommand{\cardin}[1]{\left| {#1} \right|}%
\newcommand{\ceil}[1]{\left\lceil {#1} \right\rceil}
\newcommand{\pth}[1]{\mleft({#1}\mright)}
\newcommand{\Set}[2]{\left\{ #1 \;\middle\vert\; #2 \right\}}
\newcommand{\pbrc}[1]{\mleft[ {#1} \mright]}
\newcommand{\remove}[1]{}
\newtheorem{theorem}{Theorem}[section]%
\newtheorem{lemma}[theorem]{Lemma}%
\newtheorem*{restate*}[theorem]{Restatement of }%
\newtheorem{corollary}[theorem]{Corollary}%
\newtheorem{observation}[theorem]{Observation}%
\newcommand{\myqedsymbol}{\rule{2mm}{2mm}}
\theoremstyle{remark}%
\newtheorem{defn}[theorem]{Definition}
\theoremstyle{nonumberplain}%
\newtheorem{proof}{Proof:}%
\numberwithin{figure}{section}%
\numberwithin{table}{section}%
\numberwithin{equation}{section}%
\newcommand{\HLinkSuffix}[3]{\hyperref[#2]{#1\ref*{#2}{#3}}}
\newcommand{\HLinkShort}[2]{\hyperref[#2]{#1\ref*{#2}}}
\newcommand{\HLink}[2]{\hyperref[#2]{#1~\ref*{#2}}}
\newcommand{\HLinkPage}[2]{\hyperref[#2]{#1~\ref*{#2}%
    $_\text{p\pageref{#2}}$}}
\newcommand{\eqrefpar}[1]{\hyperref[equation:#1]{(\ref*{equation:#1})}} %
\newcommand{\seclab}[1]{\label{sec:#1}} %
\newcommand{\secref}[1]{\HLink{Section}{sec:#1}} %
\providecommand{\deflab}[1]{\label{def:#1}}
\newcommand{\defref}[1]{\HLink{Definition}{def:#1}}
\newcommand{\lemlab}[1]{\label{lemma:#1}}
\newcommand{\lemref}[1]{\HLink{Lemma}{lemma:#1}}
\newcommand{\obslab}[1]{\label{observation:#1}}
\newcommand{\obsref}[1]{\HLink{Observation}{observation:#1}}
\newcommand{\thmlab}[1]{{\label{theo:#1}}}
\newcommand{\thmref}[1]{\HLink{Theorem}{theo:#1}}
\providecommand{\Mh}[1]{{#1}}
\newcommand{\Left}{\ell}%
\newcommand{\Right}{r}
\newcommand{\Int}{I}
\newcommand{\minksum}{\oplus}
\renewcommand{\th}{th\xspace}
\newcommand{\obj}{\Mh{f}}%
\newcommand{\objA}{\Mh{g}}%
\newcommand{\objB}{\Mh{h}}%
\newcommand{\ObjSet}{{\Mh{\mathcal{U}}}}%
\newcommand{\ObjSetA}{\Mh{\mathcal{V}}}%
\newcommand{\ObjSetB}{\Mh{\mathcal{H}}}%
\newcommand{\Cover}{\Mh{\mathcal{C}}}%
\newcommand{\bNotation}[1]{\Mh{#1}}%
\newcommand{\ball}{\bNotation{b}}%
\newcommand{\ballY}[2]{\Mh{\mathbbm{b}}\pth{#1, #2}}%
\newcommand{\BallSet}{\Mh{\mathcal{B}}}%
\newcommand{\SetA}{\Mh{X}}%
\newcommand{\SetB}{\Mh{Y}}%
\newcommand{\SetC}{\Mh{U}}
\newcommand{\cen}{\Mh{c}}
\newcommand{\diamX}[1]{\operatorname{\Mh{diam}}\pth{#1}}%
\newcommand{\Lines}{\mathcal{L}} %
\newcommand{\lineA}{\ell_1}      %
\newcommand{\lineB}{\ell_2}      %
\newcommand{\seg}{s}%
\renewcommand{\Re}{{\mathbb{R}}}
\newcommand{\naturalnumbers}{\mathbb{N}} %
\newcommand{\gradC}{\mathbbm{d}} %
\newcommand{\gradY}[2]{\gradC_{#1}\pth{#2}}
\newcommand{\clusters}{\Mh{\mathcal{W}}} %
\newcommand{\distY}[2]{\mleft\| #1 - #2 \mright\|}
\newcommand{\normX}[1]{\left\| #1 \right\|}
\newcommand{\cDensity}{\Mh{\rho}} %
\newcommand{\densityOp}{\Mh{\mathop{\mathrm{density}}}}%
\newcommand{\PointDec}[1]{\Mh{#1}}
\newcommand{\pnt}{\PointDec{p}}%
\newcommand{\pntA}{\PointDec{q}}%
\newcommand{\pntB}{\PointDec{u}} %
\newcommand{\pntC}{\PointDec{v}}
\newcommand{\IntSet}{\mathcal{I}} %
\newcommand{\eps}{\Mh{\varepsilon}}%
\newcommand{\SepSet}{\Mh{Z}}%
\newcommand{\Vertices}{\Mh{V}}%
\newcommand{\VerticesA}{\Mh{U}}
\newcommand{\SetL}{\Mh{L}}%
\newcommand{\SetR}{\Mh{R}}
\newcommand{\VerticesX}[1]{\Mh{V}\pth{#1}}%
\newcommand{\Edges}{\Mh{E}}
\newcommand{\EdgesX}[1]{\Edges\pth{#1}}
\newcommand{\poly}{\operatorname{poly}}%
\newcommand{\cSD}{\sigma}
\newcommand{\si}[1]{#1}%
\newcommand{\Graph}{\graph}%
\newcommand{\GInduced}[1]{\graph_{|{#1}}}
\newcommand{\atgen}{\symbol{'100}}%
\providecommand{\tildegen}{{\protect\raisebox{-0.1cm}
    {\symbol{'176}\hspace{-0.01cm}}}}
\newcommand{\KentThanks}%
{%
  \thanks{%
    Department of Computer Science; %
    University of Illinois; %
    201 N. Goodwin Avenue; %
    Urbana, IL, 61801, USA; %
    {\tt quanrud2\atgen{}illinois.edu}; %
    {\tt\href%
      {http://illinois.edu/\string~quanrud2/}%
      {http://illinois.edu/\tildegen{}quanrud2/}%
    }%
    . %
  }%
}%
\newcommand{\SarielThanks}[1][]{%
  \thanks{Department of Computer Science; %
    University of Illinois; %
    201 N. Goodwin Avenue; %
    Urbana, IL, 61801, USA; %
    {\tt sariel\atgen{}illinois.edu}; %
    {\tt \url{http://sarielhp.org/}}. #1}} %
\newcommand{\distSet}[2]{d\pth{#1,#2}}
\DeclareMathAlphabet{\mathantt}{OT1}{antt}{li}{it}
\DeclareMathAlphabet{\mathpzc}{OT1}{pzc}{m}{it}
\DeclareMathAlphabet{\mathcalligra}{T1}{calligra}{m}{n}
\newcommand{\lenX}[1]{\normX{#1}}
\newcommand{\pleftX}[1]{\Left\pth{#1}}
\newcommand{\prightX}[1]{\Right\pth{#1}}
\newcommand{\xLof}[1]{x_\Left\pth{#1}}%
\newcommand{\hL}[1]{y_\Left\pth{#1}}%
\newcommand{\xR}[1]{x_\Right\pth{#1}}%
\newcommand{\hR}[1]{y_\Right\pth{#1}}%
\newcommand{\kSD}{k}
\newcommand{\exSize}{\Mh{\lambda}}%
\newcommand{\IncludeGraphics}[2][]{%
  \typeout{}%
  \typeout{Graphics: #2}%
  \typeout{\ includegraphics[#1]{#2}}%
  \includegraphics[#1]{#2}
  \typeout{}%
}
\newcommand{\defGraph}{\graph = (\Vertices,\Edges)}
\newcommand{\class}{\Mh{\mathcal{C}}}
\newcommand{\GraphNotation}[1]{\Mh{#1}}
\newcommand{\graph}{\GraphNotation{G}}%
\newcommand{\graphA}{\GraphNotation{H}}%
\newcommand{\TheoremDefExt}[3]{%
  \expandafter\newcommand\csname bodyThm#1\endcsname{#2}
  \ifthenelse{\isempty{#3}}{%
    \begin{theorem}%
      \thmlab{#1}%
      #2
    \end{theorem}%
  }{%
    \begin{theorem}[#3]%
      \thmlab{#1}%
      #2
    \end{theorem}%
  }%
}
\newcommand{\TheoremBody}[1]{%
  \csname bodyThm#1\endcsname%
}
\newcommand{\LemmaDefExt}[3]{%
  \expandafter\newcommand\csname bodyLemBody#1\endcsname{#2}
  \ifthenelse{\isempty{#3}}{%
    \begin{lemma}%
      \lemlab{#1}%
      #2
    \end{lemma}%
  }{%
    \begin{lemma}[#3]%
      \lemlab{#1}%
      #2
    \end{lemma}%
  }%
}
\newcommand{\LemmaBody}[1]{%
  \csname bodyLemBody#1\endcsname%
}
\definecolor{sarielChangeColor}{rgb}{0.3,0.451,0.055}%
\newcommand{\SarielComp}[1]{}
\newcommand{\NotSarielComp}[1]{#1}%
\newcommand{\SarielComp}[1]{#1}%
\newcommand{\NotSarielComp}[1]{}%
  \renewcommand{\Mh}[1]{{\textcolor{ColorMath}{#1}}}
  \definecolor{blue25}{rgb}{0,0,0}
  \definecolor{blue25}{rgb}{0,0,0.7}
\begin{document}

\title{Notes on Approximation Algorithms for Polynomial-Expansion and
   Low-Density Graphs%
   \thanks{Work on this paper was partially supported by a NSF AF
      awards CCF-1421231, and 
      CCF-1217462. %
   }%
}%

\author{%
   Sariel Har-Peled%
   \SarielThanks{}%
   \and%
   Kent Quanrud%
   \KentThanks{}%
}%

\maketitle

\begin{abstract}
    This write-up contains some minor results and notes related to our
    work \cite{hq-aaldg-15-arxiv}. In particular, it shows the
    following:
    \begin{compactenum}[(A)]
        \item In \secref{proof-p:e:separator} we show that a graph
        with polynomial expansion have sublinear separators.

        \item In \secref{proof:divisions:small:excess} we show that
        hereditary sublinear separators imply that a graph have small
        divisions.

        \item In \secref{exposed:segs}, we show a natural condition on
        a set of segments, such that they have low density. This might
        be of independent interest in trying to define a realistic
        input model for a set of segments. Unlike the previous two
        results, this is new.
    \end{compactenum}

    For context and more details, see the main paper \cite{hq-aaldg-15-arxiv}.
\end{abstract}

\section{Polynomial expansion implies sublinear %
   separators}
\seclab{proof-p:e:separator}

\begin{defn}%
    \deflab{separator}%
    Let $\graph = (\Vertices,\Edges)$ be an undirected graph.  Two
    sets $\SetA, \SetB \subseteq \Vertices$ are \emphi{separate} in
    $\graph$ if
    \begin{inparaenum}[(i)]
        \item $\SetA$ and $\SetB$ are disjoint, and
        \item there is no edge between the vertices of $\SetA$ and
        $\SetB$ in $\graph$.
    \end{inparaenum}
    A set $\SepSet \subseteq \Vertices$ is a \emphi{separator} for a
    set $\SetC \subseteq \Vertices$, if
    $\cardin{\SepSet} = o\pth{\cardin{\SetC}}$, and
    $\SetC \setminus \SepSet$ can be partitioned into two
    \emph{separate} sets $\SetA$ and $\SetB$, with
    $\cardin{\SetA} \leq (2/3)\cardin{\Vertices}$ and
    $\cardin{\SetB} \leq (2/3)\cardin{\Vertices}$\footnote{Here, the
       choice of $2/3$ is arbitrary, and any constant smaller than $1$
       is sufficient.}.
\end{defn}


\begin{theorem}[{{\cite[Theorem 2.3]{prs-semig-94}}}]
    \thmlab{excluded-shallow-minor-separator}%
    Let $\graph$ be a graph with $m$ edges and $n$ vertices, and let
    $\ell, h \in \naturalnumbers$ be two integer parameters. There is
    an $O(m n/\ell)$ time algorithm that either produces
    \begin{compactenum}[\qquad(a)]
        \item the clique $K_h$ as a $\ell \log n$-shallow minor of
        $\graph$, or
        \item a separator of size at most
        $O(n/\ell + 4 \ell h^2 \log n)$.
    \end{compactenum}
\end{theorem}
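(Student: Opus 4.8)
The plan is to establish the dichotomy through breadth-first search (BFS) layering combined with a region-growing (ball-growing) argument. Fix an arbitrary root and run BFS on $\graph$, obtaining the distance layers $L_0, L_1, \dots, L_t$, where $L_i$ collects the vertices at distance exactly $i$ from the root. Write $n_i = \cardin{L_i}$ and let $b_i = \sum_{j \le i} n_j$ denote the size of the ball of radius $i$. Since edges of $\graph$ join only vertices lying in the same or in adjacent layers, deleting any single full layer disconnects the lower layers from the higher ones; the whole strategy is to delete a small, suitably placed band of layers. A single BFS runs in $O(m)$ time, and the overall $O(mn/\ell)$ bound comes from the fact that the band-extraction and minor-construction steps may be repeated $O(n/\ell)$ times.

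First I would prove the region-growing lemma: every window of $\ell \log n$ consecutive layers contains a \emph{thin} layer, meaning one with $n_i \le n/\ell$. Indeed, if no layer $L_{i+1}$ in such a window were thin, then $n_{i+1} > n/\ell \ge b_i/\ell$ would give $b_{i+1} > \pth{1 + 1/\ell}\, b_i$ at every step, and iterating over $\ell \log n$ steps yields $b > \pth{1 + 1/\ell}^{\ell \log n} > n$, a contradiction. Hence consecutive thin layers are at most $\ell \log n$ apart, and each is cheap to delete.

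Next I would convert this into a balanced separator. Let $L_m$ be the median layer, i.e.\ the least index with $b_m \ge n/2$, so that $b_{m-1} < n/2$. The region-growing lemma supplies the nearest thin layers $L_a$ (with $a \le m$) and $L_b$ (with $b \ge m$), both within $\ell \log n$ of $m$; thus the middle band $B$, consisting of the layers strictly between $a$ and $b$, spans $O(\ell \log n)$ layers. Deleting the closed band $L_a \cup B \cup L_b$ leaves the lower part (layers below $a$, of size $b_{a-1} < n/2$) and the upper part (layers above $b$, of size $n - b_b \le n/2$), which are separate and each of size at most $(2/3)n$. If $\cardin{B} \le 4\ell h^2 \log n$, this deleted set is a separator of size $O\pth{n/\ell} + 4\ell h^2 \log n$, and we are done with case (b). The two terms correspond exactly to the two thin boundary layers and to the band itself.

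The remaining case, $\cardin{B} > 4\ell h^2 \log n$, is where the work lies: we then have $O(\ell \log n)$ consecutive layers confining more than $4\ell h^2 \log n$ vertices inside a ball of radius $O(\ell \log n)$, and the claim is that such persistent concentration of mass, resistant to a cheap cut, forces $K_h$ as an $\ell \log n$-shallow minor. Here a naive vertex count does not suffice (a fat but tree-like band carries no dense minor), so the branch sets of the minor must be built by \emph{iterating} the ball-growing inside the band: one repeatedly extracts connected pieces of BFS-radius $O(\ell \log n)$ that cannot be peeled off cheaply, and arranges $h$ of them to be pairwise adjacent. The budget of $\Theta(h^2)$ interface vertices needed to realize all $\binom{h}{2}$ adjacencies, multiplied by the $O(\ell \log n)$ depth, is precisely what matches the $4\ell h^2 \log n$ threshold: below it, the band is cut directly as above; above it, the minor is forced. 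I expect this final implication --- turning ``a shallow ball that admits no cheap separator'' into an honest $\ell \log n$-shallow $K_h$ minor, with the constants aligned to $O\pth{n/\ell + 4\ell h^2 \log n}$ --- to be the main obstacle; the layering and balance steps are routine by comparison.
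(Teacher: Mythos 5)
First, note that the paper does not prove this statement at all: it is quoted verbatim from Plotkin, Rao and Smith \cite{prs-semig-94} and used as a black box, so there is no in-paper proof to compare your attempt against. Judged against the actual argument of \cite{prs-semig-94}, your proposal reproduces the easy half --- BFS layering and the region-growing observation that every window of $\ell \log n$ consecutive layers contains a layer of size at most $n/\ell$ --- but leaves the crucial half, producing $K_h$ as an $\ell\log n$-shallow minor when no cheap cut exists, as an admitted placeholder. Worse, the dichotomy you build around that placeholder is false as stated: a single fat band does not force a shallow $K_h$ minor. Consider a star: $L_1$ contains $n-1$ vertices, so the band between the nearest thin layers is huge, yet the graph contains no $K_3$ minor of any depth. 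So ``band larger than $4\ell h^2\log n$ implies the minor'' cannot be salvaged for a single BFS, and the theorem is not obtainable by analyzing one band.

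What Plotkin, Rao and Smith actually do is iterative: they maintain a partial clique minor, i.e.\ connected, pairwise-adjacent branch sets $C_1,\dots,C_j$, each a union of at most $j$ shortest paths of length at most $\ell\log n$ hanging off a common root (hence of radius at most $\ell\log n$, and of total size $O(\ell h^2 \log n)$ over all branch sets). In each phase they grow a fresh BFS in the remaining graph from a vertex adjacent to all current branch sets; either some layer within $\ell\log n$ steps is thin and yields a balanced cut of the remainder --- in which case the output separator is that thin layer \emph{together with all vertices of $C_1,\dots,C_j$}, which is precisely where the $4\ell h^2\log n$ term comes from --- or the ball reaches neighbors of every $C_i$ within $\ell\log n$ levels and a new branch set $C_{j+1}$ is carved out of shortest paths inside that ball. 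After $h$ successful phases one has the shallow $K_h$. Your proposal contains neither the iteration nor the correct accounting of the $4\ell h^2\log n$ term (you attribute it to the band of one BFS, whereas it is the cost of the partial minor accumulated across phases), so the essential idea of the proof is missing.
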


\TheoremDefExt{p:e:separator}%
{%
   Let $\class$ be a class of graphs with polynomial expansion of
   order $k$. For any graph $\graph \in \class$ with $n$ vertices and
   $m$ edges, one can compute, in
   $O\pth{m n^{1-\alpha} \log^{1-\alpha} n}$ time, a separator of size
   \begin{math}
       O \bigl( n^{1 - \alpha} \log^{1 - \alpha} n \bigr),
   \end{math}
   where $\alpha = 1/\pth{2k+2}$.%
}%
{{{\cite[Theorem 8.3]{no-gcbe2-08}}}}%

\begin{proof}
    Let $z$ be a parameter to be fixed shortly, and let
    $\ell = z / \log n$ and $cz^k/4 > \gradY{z}{\graph}$, where $c$ is
    a sufficiently large constant. Consider a $z$-shallow minor
    $\graphA$ of $\graph$ with $h = cz^k$ vertices, and observe that
    by definition, we have that
    \begin{math}
        \displaystyle%
        \cardin{\EdgesX{\graphA}}%
        \leq %
        \gradY{z}{\graph} \cardin{\VerticesX{\graphA}}%
        < %
        \frac{cz^k}{4} cz^k%
        <%
        \binom{h}{2}.
    \end{math}
    That is, the graph $H$ can not be the clique $K_h$.

    Now, by \thmref{excluded-shallow-minor-separator}, $\graph$ has a
    separator of size
    \begin{align*}
      O\pth{\Bigl.n / \ell + \ell h^2 \log n} %
      &= %
        O\pth{ %
        \frac{n \log n}{z} + \frac{z}{\log n} \cdot z^{2k} \cdot \log n
        } %
              = %
              O\pth{ %
              \frac{n \log n}{z} + z^{2k+1}} %
              = %
              O\pth{ n^{\frac{2k+1}{2k+2}} \log^{\frac{2k+1}{2k+2}} n}
    \end{align*}
    for $z = n^{1/(2k+2)} \log^{1/(2k+2)} n$. The algorithm provided
    by \thmref{excluded-shallow-minor-separator} runs in time
    \begin{math}
        O\pth{ \frac{m n }{\ell}}%
        =%
        O( \frac{m n \log n}{z} )%
        \linebreak[2]
        =%
        O(m n^{\frac{2k+1}{2k+2}} \log^{\frac{2k+1}{2k+2}} n).
    \end{math}
\end{proof}

\section{Hereditary separators imply small divisions}
\seclab{proof:divisions:small:excess}

Consider a set $\Vertices$. A \emphi{cover} of $\Vertices$ is a set
$\clusters = \Set{C_i \subseteq \Vertices}{i=1,\ldots, k \bigr.}$ such
that $\Vertices = \bigcup_{i=1}^k C_i$. A set $C_i \in \clusters$ is a
\emphi{cluster}. A cover of a graph $\defGraph$ is a cover of its
vertices. Given a cover $\clusters$, the \emph{excess} of a vertex
$v \in \Vertices$ that appears in $j$ clusters is $j-1$. The
\emphi{total excess} of the cover $\clusters$ is the sum of excesses
over all vertices in $\Vertices$.

\begin{defn}
    A cover $\Cover$ of $\graph$ is a \emphi{$\exSize$-division} if
    \begin{inparaenum}[(i)]
        \item for any two clusters $C, C' \in \Cover$, the sets
        $C \setminus C'$ and $C' \setminus C$ are separated in
        $\graph$ (i.e., there is no edge between these sets of
        vertices in $\graph$), and
        \item for all clusters $C \in \Cover$, we have
        $\cardin{C} \leq \exSize$.
    \end{inparaenum}

    A vertex $v \in \Vertices$ is an \emphi{interior vertex} of a
    cover $\clusters$ if it appears in exactly one cluster of
    $\clusters$ (and its excess is zero), and a \emphi{boundary
       vertex} otherwise. By property (i), the entire neighborhood of
    an interior vertex of a division lies in the same cluster.
\end{defn}

The property of having $\exSize$-divisions is slightly stronger than
being weakly hyperfinite.  Specifically, a graph is \emph{weakly
   hyperfinite} if there is a small subset of vertices whose removal
leaves small connected components \cite[Section
16.2]{no-s-12}. Clearly, $\exSize$-divisions also provide such a set
(i.e., the boundary vertices). The connected components induced by
removing the boundary vertices are not only small, but the
neighborhoods of these components are small as well.

As noted by Henzinger \etal \cite{hkrs-fspapg-97}, strongly sublinear
separators obtain $\exSize$-divisions with total excess $\eps n$ for
$\exSize = \poly(1/\eps)$.  Such divisions were first used by
Frederickson in planar graphs \cite{f-faspp-87}.

\begin{lemma}[\cite{hkrs-fspapg-97}]
    Let $\graph$ be a graph with $n$ vertices, such that any induced
    subgraph with $m$ vertices has a separator with
    $O(m^{\alpha} \log^\beta m)$ vertices, for some $\alpha < 1$ and
    $\beta \geq 0$.  Then, for $\eps > 0$, the graph $\graph$ has
    $\exSize$-divisions with total excess $\eps n$, where
   \begin{math}
       \exSize = O \pth{ \pth{ \eps^{-1} \log^\beta \eps^{-1}
          }^{1/(1-\alpha)}}.
   \end{math}%
\end{lemma}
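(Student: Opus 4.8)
The plan is to build the division by a recursive, balanced separator decomposition, and then to charge the total excess to the separators used along the way.

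\emph{Construction.} Maintain a collection of regions, each an induced subgraph, starting from $\graph$ itself. While some region $R$ has more than $\exSize$ vertices, apply the hereditary separator hypothesis to the induced subgraph on $R$: this yields a separator $\SepSet$ of size $O\pth{\cardin{R}^\alpha \log^\beta \cardin{R}}$ splitting $R \setminus \SepSet$ into two separated parts $\SetA, \SetB$ with $\cardin{\SetA}, \cardin{\SetB} \le (2/3)\cardin{R}$. Replace $R$ by the two regions $\SetA \cup \SepSet$ and $\SetB \cup \SepSet$, that is, the separator is copied into both children. When no region exceeds $\exSize$ vertices, output the regions as the clusters of $\Cover$. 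Property (ii) holds by the stopping rule, and property (i) is preserved as an invariant by each split: a routine induction shows that since a separator has no edges between its two sides, splitting $R$ into $\SetA \cup \SepSet$ and $\SetB \cup \SepSet$ keeps the private parts of all pairs of current regions mutually separated.

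\emph{Excess.} Copying $\SepSet$ into both children is the only source of repeated vertex appearances. Writing $g(m)$ for the total excess produced inside a region of $m$ vertices, the telescoping identity $g(m) = \cardin{\SepSet} + g\pth{\cardin{\SetA \cup \SepSet}} + g\pth{\cardin{\SetB \cup \SepSet}}$, with $g(m) = 0$ once $m \le \exSize$, unrolls to show that the total excess equals the sum of separator sizes over the internal nodes of the recursion tree, $g(n) = \sum_i \cardin{\SepSet_i} = O\pth{\sum_i m_i^\alpha \log^\beta m_i}$, where $m_i$ is the size of the region split at node $i$.

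\emph{Summation and choice of $\exSize$.} I would bound this sum scale by scale. The $(2/3)$-balance forces the largest region at depth $\ell$ to have at most $(2/3)^\ell n$ vertices, so sizes shrink geometrically and the recursion has depth $O\pth{\log\pth{n/\exSize}}$; moreover regions at a fixed depth have disjoint interiors, so their sizes sum to $O(n)$. Hence the contribution of depth $\ell$ is $O\pth{n \cdot u_\ell^{\alpha-1}\log^\beta u_\ell}$ with $u_\ell = (2/3)^\ell n$. Since $\alpha < 1$, the factor $u_\ell^{\alpha-1}$ grows geometrically in $\ell$, so this is a geometric series dominated by its finest scale, where $u_\ell = \Theta\pth{\exSize}$; summing gives $g(n) = O\pth{n\log^\beta \exSize / \exSize^{\,1-\alpha}}$. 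Setting this at most $\eps n$ requires $\exSize^{\,1-\alpha} \ge c\,\eps^{-1}\log^\beta \exSize$, and since the solution has $\log \exSize = \Theta\pth{\log \eps^{-1}}$, it is satisfied by $\exSize = O\pth{\pth{\eps^{-1}\log^\beta \eps^{-1}}^{1/(1-\alpha)}}$, as claimed.

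\emph{Main obstacle.} The delicate point is the summation. Read naively, the excess recurrence seems to pay a separator at each of the $\Theta\pth{\log\pth{n/\exSize}}$ levels, which would cost an extra logarithmic factor; indeed a maximally unbalanced recursion really is this bad. What rescues the bound is the interplay of strong sublinearity ($\alpha < 1$) with the enforced $(2/3)$-balance: balance guarantees geometric shrinkage of region sizes, and sublinearity makes the per-level cost grow geometrically toward the finest scale, so the entire sum collapses onto its bottom level. Making the ``disjoint interiors at a fixed depth'' bookkeeping rigorous for arbitrarily unbalanced subtrees, so that the per-level budget is genuinely $O(n)$ rather than accumulating across levels, is the single step that needs care.
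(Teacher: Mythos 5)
Your construction and charging scheme match the paper's: recursively split the largest piece with a hereditary balanced separator, copy the separator into both children, stop below size $\exSize$, and observe that the total excess is the sum of separator sizes, which collapses onto the finest scale because $\alpha<1$ makes the per-scale cost an increasing geometric series. The one step you flag as delicate --- that the regions at a fixed tree depth sum to $O(n)$ even though they overlap in all the copied ancestor separators --- is precisely where the paper's bookkeeping differs from yours, and it is the part you should not leave open, since the naive ``disjoint interiors'' claim is false once separators are duplicated and a depth-based stratification also misbehaves for unbalanced subtrees. The paper stratifies by \emph{size} rather than depth: a piece is at level $i$ if its vertex count lies in $\pth{N_{i+1}, N_i\bigr.}$ with $N_i=(3/4)^i n$, and since each child has size at most $(2/3)\nu+f(\nu)\le(3/4)\nu$, every split strictly decreases the level, so each piece is counted once per level regardless of how lopsided the recursion tree is. The total vertex count at level $k$ is then bounded \emph{multiplicatively} by $\Delta_k n$ with $\Delta_k=\prod_{j<k}\gamma_j$ and $\gamma_j = 1+O\pth{\log^\beta N_{j+1}/N_{j+1}^{1-\alpha}}$; bounding the product by the exponential of a geometric sum gives $\Delta_k \le 1+O\pth{\log^\beta N_k/N_k^{1-\alpha}}$, the total excess is $n\pth{\Delta_k-1}$, and requiring this to be at most $\eps n$ yields the stated $\exSize$. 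This controlled multiplicative drift is the rigorous replacement for your additive ``$O(n)$ per level'' budget, and with it your argument coincides with the paper's.
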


\begin{proof}
    Our strategy is to break $\graph$ into smaller
    pieces. Specifically, at every step the algorithm takes the
    largest remaining piece $\GInduced{\VerticesA}$, compute a
    balanced separator $\SepSet \subseteq \VerticesA$ for it, with
    $\SetL, \SetR \subseteq \VerticesA$ being the two separated
    pieces. Specifically, we have %
    \smallskip%
    \begin{compactenum}[\qquad(i)]
        \item $\SepSet = \SetL \cap \SetR$,
        \item $\SetL \cup \SetR = \VerticesA$,
        \item $\cardin{\SetL} \leq (2/3) \cardin{\VerticesA}$ and
        $\cardin{\SetR} \leq (2/3) \cardin{\VerticesA}$ (see
        \defref{separator}),
        \item $\SetL \setminus \SepSet$ is separated from
        $\SetR \setminus \SepSet$ in $\GInduced{\VerticesA}$, and
        \item
        $\cardin{\SepSet} \leq f\pth{\cardin{\VerticesA} \bigr.}$,
        where $f(m ) \leq c m^{\alpha} \log^\beta m$, where $c$ is a
        sufficiently large constant.
    \end{compactenum}
    \smallskip%
    Now, the algorithm replaces $\GInduced{\VerticesA}$ by the two
    ``broken'' pieces $\GInduced{L}$ and $\GInduced{R}$.  The
    algorithm continues in this process until all pieces are of size
    smaller than $b$ (and by construction, of size at least, say,
    $b/4$), where $b$ is some parameter to be specified shortly.  This
    generates a natural binary separator tree, where the final pieces
    of the division are the leafs.

    Let $N_i = (3/4)^i n$, for $i=0, \ldots, h=\ceil{ \log_{4/3} n}$.
    A piece $\graph_\VerticesA$ is at \emph{level $i$} if
    $N_{i+1} < \cardin{\VerticesA} \leq N_i$.
    Consider such a subproblem at node $y$, which is at level $i$ with
    $\nu$ vertices. The total size of the subproblems of its two
    children is $\leq \nu + 2f\pth{\nu}$ (here, somewhat confusingly,
    we count the separator vertices as new, in both subproblems --
    this makes the following argument somewhat easier).  Importantly,
    each of the subproblems is of size
    $\leq (2/3)\nu + f\pth{\nu} \leq (3/4) \nu$, implying that both
    subproblems are in strictly lower level.  As such, the fraction of
    the new vertices created as subproblems move from the $i$\th level
    to the next is bounded by
    \begin{align*}
      \nu + 2f\pth{\nu}%
      \leq%
      \nu + 2c \nu^\alpha \log^\beta \nu%
      =%
      \pth{1 + \frac{2c \log^\beta \nu}{\nu^{1- \alpha } }} \nu%
      \leq%
      \gamma_i \nu,
    \end{align*}
    for
    \begin{math}
        \displaystyle%
        \gamma_i %
        = %
        {1 + 2\frac{c \log^\beta N_{i+1}}{\pth{N_{i+1}}^{1 - \alpha }
           }}.
    \end{math}
    In particular, the total number of vertices in the $k$\th level is
    at most $\Delta_k n$, where
    \begin{align*}
      \Delta_k%
      &=%
        \prod_{j=0}^{k-1} \gamma_j%
        \leq%
        \prod_{j=0}^{k-1} \exp \pth{ 2 \frac{c \log^\beta
        N_{j+1}}{\pth{N_{j+1}}^{1 - \alpha} }} %
      =%
      \exp \pth{ \sum_{j=0}^{k-1} \frac{2 c \log^\beta
      N_{j+1}}{\pth{N_{j+1}}^{1 - \alpha} }}%
      \leq%
      \exp \pth{ \frac{ c' \log^\beta N_{k}}{\pth{N_{k}}^{1 - \alpha}
      }}%
      \\%
      &\leq%
        1 + \frac{ 2 c' \log^\beta N_{k}}{\pth{N_{k}}^{1 - \alpha} },
    \end{align*}
    since the summation behaves like an increasing geometric series,
    and $c'$ is a constant that depends on $c$. The last step follows
    as $e^x \leq 1+2x$, for $0 \leq x \leq 1/2$. In particular,
    because of the double counting of the separator vertices, the
    total number of marked vertices in the first $k$ levels is bounded
    by $n\pth{\Delta_k - 1}$. As such, we need that
    $\Delta_k - 1\leq\eps$. This is equivalent to
    \begin{align*}
      \frac{ 2 c' \log^\beta N_{k}}{\pth{N_{k}}^{1 - \alpha} } \leq \eps
      \iff%
      \frac{ 2 c' }{\eps } \leq \frac{\pth{N_{k}}^{1 -
      \alpha}}{\log^\beta N_k},
    \end{align*}
    which holds if
    $N_k \geq \pth{ c'' \eps^{-1} \log^\beta \eps^{-1}
    }^{1/(1-\alpha)}$, where $c''$ is a sufficiently large
    constant. In particular, setting $b$ to (say) twice this threshold
    implies the claim.
\end{proof}


\section{On exposed sets of segments and their %
  density}
\seclab{exposed:segs}

Let $\cSD > 0$ be a fixed parameter. We say that an object $\obj$
\emphi{$\cSD$-shadows} (or simply \emphi{shadows}) another object
$\objA$ if
\begin{align*}
    \max_{\pntA \in \objA}\distSet{\pntA}{\obj} &\leq \cSD \cdot
    \diamX{\objA},
\end{align*}
where
\begin{math}
    \distSet{\pntA}{\obj} = \min_{\pntB \in \obj} \distY
    {\pntA}{\pntB}.
\end{math}
Equivalently, $\obj$ $\cSD$-shadows $\objA$ $\iff$
\begin{math}
    \objA%
    \subseteq %
    \obj \minksum \ballY{0}{\cSD \cdot \diamX{\objA}}.%
\end{math}
Here,
\begin{math}
    \SetA \minksum \SetB = \Set{\pntA + \pntB}{\pntA \in \SetA, \pntB
       \in \SetB}
\end{math}
denotes the \emphi{Minkowski sum} of $\SetA$ and $\SetB$. A set of
objects $\ObjSet$ is \emphi{$\cSD$-exposed} if no object in $\ObjSet$
$\cSD$-shadows another object in $\ObjSet$.

\begin{observation}%
    \obslab{containing-sets-shadow}%
    Let $\obj$ and $\objA$ be two objects and $\cSD \geq 0$. If
    $\obj \subseteq \objA$, then $\objA$ $\sigma$-shadows $\obj$.
\end{observation}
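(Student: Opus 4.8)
The plan is to verify the defining inequality for shadowing directly, since the containment $\obj \subseteq \objA$ makes every relevant point-to-set distance vanish. Unwinding the definition with $\objA$ playing the role of the shadowing object and $\obj$ the shadowed one, the statement ``$\objA$ $\cSD$-shadows $\obj$'' is precisely the claim that $\max_{\pnt \in \obj} \distSet{\pnt}{\objA} \leq \cSD \cdot \diamX{\obj}$. So it suffices to bound the left-hand side.

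First I would fix an arbitrary point $\pnt \in \obj$. Since $\obj \subseteq \objA$, we have $\pnt \in \objA$, so taking $\pntB = \pnt$ in the minimum defining $\distSet{\pnt}{\objA} = \min_{\pntB \in \objA} \distY{\pnt}{\pntB}$ yields $\distY{\pnt}{\pnt} = 0$; as distances are nonnegative this forces $\distSet{\pnt}{\objA} = 0$. Taking the maximum over all $\pnt \in \obj$ gives $\max_{\pnt \in \obj} \distSet{\pnt}{\objA} = 0$, and since $\cSD \geq 0$ and $\diamX{\obj} \geq 0$ we conclude $0 \leq \cSD \cdot \diamX{\obj}$, establishing the inequality.

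Equivalently, one can argue through the Minkowski-sum characterization given with the definition: $\objA$ $\cSD$-shadows $\obj$ iff $\obj \subseteq \objA \minksum \ballY{0}{\cSD \cdot \diamX{\obj}}$. Because the ball of radius $\cSD \cdot \diamX{\obj} \geq 0$ contains the origin, we have $\objA \subseteq \objA \minksum \ballY{0}{\cSD \cdot \diamX{\obj}}$, and combining this with $\obj \subseteq \objA$ gives the desired containment. There is no genuine obstacle here; the only point requiring care is correctly matching the asymmetric roles of shadower and shadowed in the definition, and observing that the bound holds trivially because the distances involved are all zero.
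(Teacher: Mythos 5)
Your proof is correct: the paper states this as an observation with no proof, precisely because the intended argument is the trivial one you give --- for every $\pnt \in \obj \subseteq \objA$ we have $\distSet{\pnt}{\objA} = 0$, so $\max_{\pnt \in \obj} \distSet{\pnt}{\objA} = 0 \leq \cSD \cdot \diamX{\obj}$. You also correctly identify the one subtlety (which object is the shadower and which is shadowed in the asymmetric definition), and your alternative Minkowski-sum argument is equally valid.
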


\begin{figure}
    \centerline{%
       \begin{tabular}{c%
         cc}
         \IncludeGraphics[page=1]{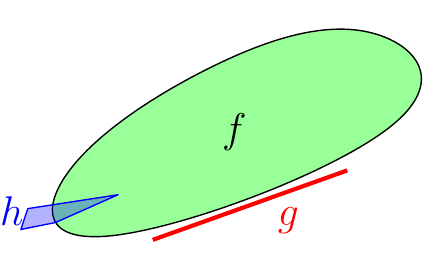}
         &%
           \IncludeGraphics[page=2]{figs/shadow}
         &%
           \IncludeGraphics[page=3]{figs/shadow}
         \\[-.4cm]
         (A) & (B) & (C)
       \end{tabular}%
    }
    \caption{(A) Objects $\obj, \objA, \objB$. (B) $\obj$
       $1/8$-shadows $\objA$. (C) $\obj$ does not $1/8$-shadow
       $\objB$.}
\end{figure}

\subsection{On the density of exposed segments}

\subsubsection{Intervals in $\Re$}

Following the above, interval $\Int = [\Left,\Right]$
\emphi{$\sigma$-exposes} $\Int' = [\Left',\Right']$, if $\Int'$ is not
contained in the interval
$\pbrc{\Left - \sigma\lenX{J}, \Right+ \sigma \lenX{J}\bigr.}$, where
$\lenX{\Int'} = \Left'-\Right'$ denotes the length of $\Int'$.


\bigskip
\noindent
\begin{minipage}{0.7\linewidth}
    \begin{lemma}%
        \lemlab{n:s:intervals}%
        Let $I = [\Left,\Right]$ and $I' = [\Left',\Right']$ be two
        overlapping intervals on the real line. If $I$ and $I'$
        $\cSD$-expose each other, then
        $\cardin{\Left' - \Left} \geq \cSD \lenX{I}$ and
        $\cardin{\Right' - \Right} \geq \cSD \lenX{I'}$.
    \end{lemma}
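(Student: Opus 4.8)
The plan is to collapse the situation to a single ``crossing'' configuration of the two intervals and then read the two inequalities straight off the definition of exposure, with essentially no computation. First I would rule out containment: if $\Int' \subseteq \Int$, then by \obsref{containing-sets-shadow} the interval $\Int$ $\cSD$-shadows $\Int'$, contradicting the hypothesis that $\Int$ $\cSD$-exposes $\Int'$; symmetrically $\Int \subseteq \Int'$ contradicts that $\Int'$ $\cSD$-exposes $\Int$. So neither interval contains the other, and since they overlap their endpoints must interleave. Orienting the labels so that $\Int$ is the left interval (i.e. $\Left \leq \Left'$), the no-containment property then forces the strict interleaving $\Left < \Left' \leq \Right < \Right'$: from $\Int' \not\subseteq \Int$ we get $\Right < \Right'$, and from $\Int \not\subseteq \Int'$ we get $\Left < \Left'$, while overlap supplies $\Left' \leq \Right$.

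Next I would unwind the two exposure hypotheses. By definition, $\Int$ $\cSD$-exposes $\Int'$ means $\Int'$ is \emph{not} contained in $\pbrc{\Left - \cSD\lenX{\Int'}, \Right + \cSD\lenX{\Int'}}$, i.e. $\Left' < \Left - \cSD\lenX{\Int'}$ or $\Right' > \Right + \cSD\lenX{\Int'}$. In the crossing configuration $\Left' \geq \Left \geq \Left - \cSD\lenX{\Int'}$ (using $\cSD > 0$), so the first disjunct is impossible and the second must hold, giving $\Right' - \Right > \cSD\lenX{\Int'}$. Since $\Right' > \Right$, this is exactly $\cardin{\Right' - \Right} \geq \cSD\lenX{\Int'}$.

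The remaining bound is the mirror image. Because $\Int'$ $\cSD$-exposes $\Int$, the interval $\Int$ is not contained in $\pbrc{\Left' - \cSD\lenX{\Int}, \Right' + \cSD\lenX{\Int}}$, i.e. $\Left < \Left' - \cSD\lenX{\Int}$ or $\Right > \Right' + \cSD\lenX{\Int}$. Now $\Right \leq \Right' \leq \Right' + \cSD\lenX{\Int}$ kills the second disjunct, forcing $\Left' - \Left > \cSD\lenX{\Int}$, which yields $\cardin{\Left' - \Left} \geq \cSD\lenX{\Int}$.

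I expect the only genuine subtlety to be the bookkeeping in the first step rather than any inequality manipulation: the conclusion is \emph{not} symmetric under swapping $\Int$ and $\Int'$ (the left-endpoint gap is compared against $\lenX{\Int}$ and the right-endpoint gap against $\lenX{\Int'}$), so one must fix the orientation ``$\Int$ is the left interval'' before quoting the definition — otherwise the statement fails — and one must verify that \obsref{containing-sets-shadow} together with the overlap hypothesis really does force the strict interleaving $\Left < \Left' \leq \Right < \Right'$. Once the configuration is pinned down, each exposure hypothesis has exactly one viable disjunct and the two bounds fall out immediately.
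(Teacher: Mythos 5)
Your proof is correct and follows essentially the same route as the paper's: order the intervals by left endpoint, rule out containment via \obsref{containing-sets-shadow} to pin down the crossing configuration $\Left \leq \Left' \leq \Right \leq \Right'$, and then observe that each exposure hypothesis has only one viable way to fail containment in the inflated interval, yielding the two bounds. Your explicit remark that the conclusion is asymmetric and requires fixing the orientation $\Left \leq \Left'$ is a fair point that the paper glosses over with a ``without loss of generality.''
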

\end{minipage}\hfill
\begin{minipage}{0.25\linewidth}
    \IncludeGraphics[width=0.99\linewidth]{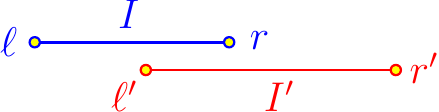}
\end{minipage}

\begin{proof}
    Without loss of generality, assume that $\Left \leq \Left'$. Since
    $I$ and $I'$ are overlapping, we have $\Left' \leq \Right$.
    Furthermore, if $\Right' \leq \Right$, then $I \subseteq I'$ and
    by \obsref{containing-sets-shadow} the interval $I$ $\cSD$-shadows
    $I'$.  So it must be that $\Right' > \Right$.  Since the left
    endpoint $\Left'$ of $I'$ is contained in $\Int$, $\Int$ does not
    $\cSD$-shadow $\Int'$ only if $\Right'$ extends at least
    $\cSD \lenX{\Int'}$ past $\Right$. Similarly, if $\Int'$ does not
    $\cSD$-shadow $\Int$, then $\Left' - \Left \geq \cSD \lenX{\Int}$.
\end{proof}

\begin{lemma}
    \lemlab{int:low:density}%
    Let $\IntSet$ be a set of intervals all covering a common point
    $\pnt$. If $\IntSet$ is a $\cSD$-exposed set of intervals, then
    $\cardin{\IntSet} = O\pth{1/\cSD^{2}}$.
\end{lemma}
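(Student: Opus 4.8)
The plan is to exploit that all intervals pass through $\pnt$, so every pair overlaps and, since $\IntSet$ is $\cSD$-exposed, every pair mutually $\cSD$-exposes; hence \lemref{n:s:intervals} applies to each pair. First I would sort the intervals as $\Int_1,\dots,\Int_n$ by their left endpoints, $\Left_1\le\cdots\le\Left_n$. For $i<j$ the proof of \lemref{n:s:intervals} shows (after ruling out containment, which would force a shadow by \obsref{containing-sets-shadow}) that $\Right_i<\Right_j$, so the right endpoints are sorted the same way. Writing $a_i=\Right_i-\pnt\ge 0$ for the right reach and $b_i=\pnt-\Left_i\ge 0$ for the left reach, the sequence $(a_i)$ is nondecreasing, $(b_i)$ is nonincreasing, and \lemref{n:s:intervals} gives, for every $i<j$, the two inequalities $a_j-a_i\ge\cSD\pth{a_j+b_j}$ and $b_i-b_j\ge\cSD\pth{a_i+b_i}$.

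Applying these to consecutive indices and discarding the nonnegative term $b_{i+1}$ (resp.\ $a_i$) yields the multiplicative relations $a_{i+1}\ge a_i/\pth{1-\cSD}$ and $b_{i+1}\le\pth{1-\cSD}b_i$. Thus the right reaches grow geometrically: across the interior indices $2\le i\le n-1$ one gets $a_{n-1}/a_2\ge\pth{1-\cSD}^{-(n-3)}$. To turn this into an upper bound on $n$, I would next bound the aspect ratios of the interior intervals. Using $a_1\ge 0$ in the first inequality with the pair $(1,i)$ gives $b_i/a_i\le\pth{1-\cSD}/\cSD$ for every $i\ge 2$, and using $b_n\ge 0$ in the second inequality with the pair $(i,n)$ gives $a_i/b_i\le\pth{1-\cSD}/\cSD$ for every $i\le n-1$. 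Hence each interior interval satisfies $\cSD/\pth{1-\cSD}\le a_i/b_i\le\pth{1-\cSD}/\cSD$.

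Combining the two facts finishes the argument. Since $b$ is nonincreasing, $a_{n-1}\le\frac{1-\cSD}{\cSD}b_{n-1}\le\frac{1-\cSD}{\cSD}b_2\le\frac{(1-\cSD)^2}{\cSD^2}a_2$, so the span of the right reaches over the interior is at most a factor $1/\cSD^2$. Together with the geometric lower bound this gives $\pth{1-\cSD}^{-(n-3)}\le 1/\cSD^2$; taking logarithms and using $\ln\frac{1}{1-\cSD}\ge\cSD$ yields $n\le 3+\frac{2\ln(1/\cSD)}{\cSD}=O\pth{1/\cSD^2}$. The one point to handle with care is the degenerate case $a_i=0$ or $b_i=0$: the pairwise inequalities force such an interval to be the single point $\setof{\pnt}$, of which there can be at most one, so it changes the count only by an additive constant; the two extreme intervals $\Int_1,\Int_n$ lie outside the aspect-ratio window and are likewise absorbed into the additive constant. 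I expect the main obstacle to be organizing this last combination cleanly—in particular pairing the geometric growth of the right reaches correctly with the two-sided aspect-ratio window, since that is exactly what converts the two independent factors of $1/\cSD$ into the claimed $O\pth{1/\cSD^2}$ bound (and in fact gives the slightly sharper $O\pth{\cSD^{-1}\log(1/\cSD)}$).
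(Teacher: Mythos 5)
Your proof is correct, and it takes a genuinely different (and quantitatively sharper) route than the paper's. The paper also sorts by left endpoints and feeds pairs into \lemref{n:s:intervals}, but it invokes Dilworth's theorem to extract a monotone subsequence of right endpoints of length $\sqrt{n}$ (even though, as you observe, ruling out containment via \obsref{containing-sets-shadow} already forces the \emph{entire} sequence of right endpoints to be increasing), then restricts to the half of that subsequence whose left reach dominates the right reach and runs an additive telescoping argument: each step pushes the right endpoint out by at least $\cSD$ times the left reach, so after roughly $1/\cSD$ steps the right reach would overtake the left reach, a contradiction. The Dilworth step is what costs the paper the square, turning an $O(1/\cSD)$ bound on the subsequence into $O(1/\cSD^{2})$ overall. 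Your multiplicative bookkeeping --- geometric growth of the right reaches combined with the two-sided aspect-ratio window $\cSD/(1-\cSD) \le a_i/b_i \le (1-\cSD)/\cSD$ for the interior intervals --- avoids both the Dilworth loss and the case split, and yields the stronger bound $O\pth{\cSD^{-1}\log(1/\cSD)}$, which of course implies the stated $O\pth{1/\cSD^{2}}$. Two minor caveats, neither a gap: the division by $1-\cSD$ tacitly assumes $\cSD < 1$ (for $\cSD \ge 1$ the pairwise inequalities immediately force $\cardin{\IntSet} \le 2$, so nothing is lost), and you need $a_2 > 0$ before dividing by it, which your degenerate-case discussion supplies, since a point interval is contained in, hence shadowed by, every other interval through $\pnt$.
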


\begin{proof}
    Let the $i$\th interval of $\IntSet$ be
    $\Int_i = [\Left_i, \Right_i]$, for $i=1,\ldots, n$.  Furthermore,
    assume that $\Left_1 \leq \Left_2 \leq \cdots \Left_n$.  By
    Dilworth's theorem, there exists a subsequence
    $i_1 < i_2 < \cdots < i_k$ with $k \geq \sqrt{n}$, such that
    either
    $\Right_{i_1} \leq \Right_{i_2} \leq \cdots \leq \Right_{i_k}$ or
    $\Right_{i_1} \geq \Right_{i_2} \geq \cdots \geq \Right_{i_k}$.
    The later possibility implies that $I_{i_1} \subseteq I_{i_2}$,
    which contradicts the assumption that $\IntSet$ is $\cSD$-exposed.
    Assume, without loss of generality, that for at least half the
    intervals in this sequence, we have
    $\cardin{\Left(I)} \geq \Right(I)$, and let
    $\Int_1', \ldots, \Int_{k/2}'$ be the resulting subsequence
    restricted to these intervals, where
    $\Int_1' = \pbrc{\Left_i', \Right_i'}$ for all $i$. (The other
    case is handled by symmetric argument.)

    We have
    $\Left_1' \leq \ldots \Left_{k/2}' \leq 0 \leq \Right_1' \leq
    \cdots \leq \Right_{k/2}'$.
    By \lemref{n:s:intervals}, we have that for any $i$, we have
    $\Right_i' - \Right_{i-1}' \geq \cSD \lenX{ \Int_{i-1}'} \geq \cSD
    \cardin{\Left_{i-1}'}$.
    Summing this inequality for $i=2,\ldots, t$, we have
    \begin{align*}
        \Right_t'%
        >%
        \Right_t' - \Right_{1}'%
        \geq%
        \cSD \sum_{i=1}^{t-1}\cardin{\Left_{i}'}%
        \geq%
        \cSD \cdot (t-1) \cardin{\Left_{t}}%
        > %
        \cardin{\Left_{t}},
    \end{align*}
    for $t= \ceil{1/\cSD}+2$, which is a contradiction. We conclude
    that $\sqrt{n}/2 \leq k/2 \leq \ceil{1/\cSD}+2$, which readily
    implies the claim.
\end{proof}

\subsubsection{Line segments through a point}

\begin{lemma}%
    \lemlab{small:angles}%
    Let $\Lines$ be a set of segments in $\Re^d$, and $\cSD > 0$,
    $\theta \in (0,\pi/2)$ be parameters. Furthermore, assume that
    \begin{inparaenum}[(i)]
        \item $\Lines$ is $\cSD$-exposed,
        \item $\bigcap_{\seg \in \Lines} \seg \neq \emptyset$,
        \item for all pairs $\lineA, \lineB \in \Lines$, the angle
        between $\lineA$ and $\lineB$ is at most $\theta$, and
        \item $\sin \theta \leq \frac{\cSD}{4}$.
    \end{inparaenum}
    Then $\cardin{\Lines} =O(1/ \cSD^{2})$.
\end{lemma}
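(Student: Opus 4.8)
The plan is to reduce to the one-dimensional \lemref{int:low:density} by projecting everything onto a single line. By assumption~(ii) the segments share a common point $\pnt$; translate so that $\pnt$ is the origin, so that each $\seg \in \Lines$ lies on a line through the origin. Fix a reference segment $\seg_0 \in \Lines$, let $\ell$ be its supporting line, and let $\pi$ be the orthogonal projection onto $\ell$; by assumption~(iii) every $\seg \in \Lines$ makes an angle $\phi_\seg \le \theta$ with $\ell$, and each $\pi(\seg)$ is an interval on $\ell$ containing the origin with $\lenX{\pi(\seg)} = \lenX{\seg}\cos\phi_\seg \le \lenX{\seg}$. The goal is to show that $\cSD$-exposedness of $\Lines$ forces the family $\{\pi(\seg) : \seg \in \Lines\}$ to be $(\cSD/2)$-exposed; then \lemref{int:low:density}, applied to these intervals (which all cover the common origin), bounds their number, and hence $\cardin{\Lines}$, by $O(1/\cSD^2)$. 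I would treat only the regime $\cSD \le 1$, where the projection argument is needed; for $\cSD > 1$ being $\cSD$-exposed is so restrictive (the shadow-neighborhoods are huge) that any two comparable segments through a point shadow each other, so $\cardin{\Lines} = O(1) = O(1/\cSD^2)$ trivially.

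The crux is the claim: \emph{if $\pi(\seg)$ $(\cSD/2)$-shadows $\pi(\seg')$, then $\seg$ $\cSD$-shadows $\seg'$}; the contrapositive then transfers exposedness to the projections (and in particular shows the projected intervals are distinct). To prove the claim, fix any $\pntC \in \seg'$ and write $\pntC = y + \pntC_\perp$ with $y = \pi(\pntC) \in \ell$ and $\pntC_\perp \perp \ell$. Since $\pntC$ lies within distance $\lenX{\seg'}$ of the origin along a line at angle $\phi_{\seg'} \le \theta$ to $\ell$, we have $\lenX{\pntC_\perp} \le \lenX{\seg'}\sin\theta$. The shadow hypothesis yields $y' \in \pi(\seg)$ with $\distY{y}{y'} \le (\cSD/2)\lenX{\pi(\seg')} \le (\cSD/2)\lenX{\seg'}$; lift $y'$ to the unique $\pntB \in \seg$ with $\pi(\pntB) = y'$. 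The key point is that the parameter of $\pntB$ along its line, namely $\lenX{y'}/\cos\phi_\seg \le \pth{\lenX{y} + (\cSD/2)\lenX{\seg'}}/\cos\theta$, is $O(\lenX{\seg'})$ because $\lenX{y} \le \lenX{\seg'}$ and $\cos\theta$ is bounded below; hence its perpendicular deviation satisfies $\lenX{\pntB_\perp} \le O(\lenX{\seg'})\sin\theta$. Since $\pntC - \pntB$ splits orthogonally into the $\ell$-component $y-y'$ and the perpendicular component $\pntC_\perp - \pntB_\perp$,
\[
   \distY{\pntC}{\pntB}^2 = \distY{y}{y'}^2 + \lenX{\pntC_\perp - \pntB_\perp}^2 \le \pth{\tfrac{\cSD}{2}\lenX{\seg'}}^2 + \pth{O(\sin\theta)\,\lenX{\seg'}}^2 .
\]
Invoking assumption~(iv), $\sin\theta \le \cSD/4$, the right-hand side is at most $\pth{\cSD\lenX{\seg'}}^2$, so $\distSet{\pntC}{\seg} \le \distY{\pntC}{\pntB} \le \cSD\lenX{\seg'} = \cSD\,\diamX{\seg'}$. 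As $\pntC \in \seg'$ was arbitrary, $\seg$ $\cSD$-shadows $\seg'$.

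With the claim established, $\cSD$-exposedness of $\Lines$ implies that no projected interval $(\cSD/2)$-shadows another, so $\{\pi(\seg) : \seg \in \Lines\}$ is a $(\cSD/2)$-exposed family of intervals covering the common origin, and \lemref{int:low:density} gives $\cardin{\Lines} = O\pth{1/(\cSD/2)^2} = O(1/\cSD^2)$. I expect the one delicate step to be the distance estimate above: one must verify that the lifted point $\pntB$ stays at parameter $O(\lenX{\seg'})$ from the origin, so that its perpendicular deviation is controlled by $\sin\theta$ rather than by the possibly much larger length of $\seg$. This is precisely where hypothesis~(iv) is consumed, and it is the part of the argument that needs to be gotten exactly right.
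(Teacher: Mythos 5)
Your proposal is correct and follows essentially the same route as the paper's proof: project the segments onto a common reference line through the shared point, show via a lifting argument (using $\sin\theta\le\cSD/4$ to control the perpendicular deviations) that the projected intervals inherit $\Omega(\cSD)$-exposedness, and then invoke \lemref{int:low:density}. The only differences are cosmetic --- you use the constant $\cSD/2$ with a Pythagorean estimate where the paper uses $\cSD/4$ with the triangle inequality, and you explicitly dispose of the $\cSD>1$ regime and the $\cos\theta$ lower bound, which the paper leaves implicit.
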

\begin{proof}
    Without loss of generality, we assume that the lines intersect at
    the origin and the angle between any line and the $x$-axis is at
    most $\theta$. For each $\seg \in \Lines$, let $\pleftX{\seg}$ be
    the left endpoint of $\seg$, let $\xLof{\seg}$ be the
    $x$-coordinate of $\pleftX{\seg}$, and let $\hL{\seg}$ the
    distance from $\pleftX{\seg}$ to the $x$-axis.  Similarly we
    define $\prightX{\seg}$, $\xR{\seg}$, and $\hR{\seg}$ with respect
    to the right endpoint. For $\seg \in \Lines$, let
    $\Int_\seg = \pbrc{\xLof{\seg},\xR{\seg}}$ be the projection of
    $\seg$ onto the $x$-axis, and let
    $\IntSet_\Lines = \Set{\Int_\seg}{ \seg \in \Lines}$.

    We claim that $\IntSet$ is $(\cSD/4)$-exposed. Indeed, suppose
    that there are two segments $\seg, \seg'$, such that
    $\Int_\seg = \pbrc{\Left, \Right}$ is $\cSD/4$-shadowing
    $\Int_{\seg'} = \pbrc{\Left', \Right'}$.  We define the following
    sequence of points: %
    \SaveIndent%

    \smallskip
    \noindent%
    \begin{minipage}{0.6\linewidth}
        \RestoreIndent%
        {\medskip}%
        \begin{compactenum}[\quad(i)]
            \item $\pnt$ is any point on $\seg'$,

            \item $\pntA$ is the projection of $\pnt$ into
            $\Int_{\seg'}$,

            \item $\pntB$ is any point in $\Int_\seg$ that is in
            distance at most $(\cSD/4) \lenX{\Int_{\seg'}}$ from
            $\pntA$, and it is closer to the origin than $\pntB$, and

            \item $\pntC$ is the point on $\seg$ whose projection on
            $\Int_\seg$ is $\pntB$.
        \end{compactenum}
        {\smallskip}%
        See the figure on the right.
    \end{minipage}
    \begin{minipage}{0.39\linewidth}
        \hfill
        \IncludeGraphics[width=0.95\linewidth]%
        {\si{figs/seg_shadow}}%
    \end{minipage}%

    \smallskip%
    \noindent%

    We now have that
    $\distY{\pnt}{\pntA} \leq \lenX{\seg'} \sin \theta \leq (\cSD/4)
    \lenX{\seg'}$.
    Similarly, as $\pntB$ is closer to the origin than $\pntA$, we
    have that $\distY{\pntB}{\pntC} \leq (\cSD/4) \lenX{\seg'}$.
    Also, since
    \begin{math}
        \distY{\pntA}{\pntB} \leq (\cSD/4) \lenX{\Int_{\seg'}} \leq
        (\cSD/4) \lenX{\seg'},
    \end{math}
    we have by the triangle inequality that
    \begin{math}
        \distY{\pnt}{\pntC}%
        \leq %
        \distY{\pnt}{\pntA} + \distY{\pntA}{\pntB} +
        \distY{\pntB}{\pntC}%
        \leq%
        \cSD \lenX{\seg'},
    \end{math}
    which implies that $\seg'$ is $\cSD$-shadowed by $\seg$, a
    contradiction.

    Now, \lemref{int:low:density} implies that
    \begin{math}
        \cardin{\Lines} = \cardin{\IntSet_\Lines} = O\pth{1/\cSD^2}.
    \end{math}
\end{proof}

\begin{lemma}%
    \lemlab{segs:through:p}%
    Let $\Lines$ be a set of segments in $\Re^d$ and $\cSD \in (0,1)$
    a fixed parameter, such that
    \begin{inparaenum}[(i)]
        \item $\Lines$ is $\sigma$-exposed, and
        \item $\bigcap_{\seg \in \Lines} \seg \neq \emptyset$.
    \end{inparaenum}
    Then $\cardin{\Lines} = O\pth{1/\cSD^{d+2}}$.
\end{lemma}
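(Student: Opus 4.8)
The plan is to reduce to \lemref{small:angles} by bucketing the segments according to their orientation, so that within each bucket all segments are nearly parallel. Fix the threshold angle $\theta = \arcsin(\cSD/4)$, so that $\sin \theta = \cSD/4$ and, since $\arcsin x \geq x$ for $x \in [0,1]$, we have $\theta \geq \cSD/4$ and hence $1/\theta = O(1/\cSD)$. Note also that $\theta \in (0,\pi/2)$ because $\cSD \in (0,1)$. Each segment $\seg \in \Lines$ determines an (unoriented) direction, i.e.\ a point of the projective direction sphere $\mathbb{RP}^{d-1}$ obtained from $\sphereC^{d-1}$.

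First I would cover the direction sphere by spherical caps of angular radius $\theta/2$. A standard volume-packing argument shows that $N = O\pth{(1/\theta)^{d-1}} = O\pth{1/\cSD^{d-1}}$ caps suffice: the direction sphere has dimension $d-1$, and a cap of angular radius $\theta/2$ has measure $\Omega\pth{\theta^{d-1}}$. I then assign each segment to one cap containing its direction, partitioning $\Lines$ into at most $N$ groups. Because the caps have radius $\theta/2$, any two segments in the same group make an angle at most $\theta$ with each other.

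Next, for a single group $\Lines' \subseteq \Lines$ I would verify the four hypotheses of \lemref{small:angles}. It is $\cSD$-exposed, since $\cSD$-exposedness is inherited by subsets of the $\cSD$-exposed set $\Lines$ (removing objects cannot create a shadowing pair). It has nonempty common intersection, since every segment contains the point $\pnt \in \bigcap_{\seg \in \Lines}\seg$. Every pair of its segments meets at angle at most $\theta$ by the bucketing, and $\sin\theta \leq \cSD/4$ by our choice of $\theta$. Hence \lemref{small:angles} gives $\cardin{\Lines'} = O(1/\cSD^2)$ for each group.

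Finally, summing the per-group bound over all groups yields
\begin{math}
  \cardin{\Lines}
  \leq
  N \cdot O\pth{1/\cSD^2}
  =
  O\pth{1/\cSD^{d-1}} \cdot O\pth{1/\cSD^2}
  =
  O\pth{1/\cSD^{d+1}}
  =
  O\pth{1/\cSD^{d+2}},
\end{math}
where the last step uses $\cSD \in (0,1)$. The main technical point is the covering estimate for the direction sphere, together with confirming that the bucketing really preserves all four hypotheses of \lemref{small:angles} (in particular that the chosen cap radius $\theta/2$ forces pairwise angle at most $\theta$); once those are in place, the per-bucket bound and the product are immediate. Note this argument in fact proves the slightly stronger bound $O\pth{1/\cSD^{d+1}}$, which a fortiori gives the stated claim.
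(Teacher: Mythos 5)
Your proof is correct and takes essentially the same route as the paper: partition $\Lines$ into clusters of nearly parallel segments and apply \lemref{small:angles} to each cluster. The only difference is that you count the direction caps more carefully as $O\pth{\cSD^{-(d-1)}}$ rather than the paper's cruder $O\pth{\cSD^{-d}}$, which is why you obtain the marginally stronger bound $O\pth{1/\cSD^{d+1}}$.
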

\begin{proof}
    Partition $\Lines$ into $O(\cSD^{-d})$ clusters such that any two
    lines in the same cluster forms an angle $\leq \cSD /4$.  By
    \lemref{small:angles}, each cluster contains at most
    $O\pth{1/\cSD^{2}}$ segments, and the claim follows.
\end{proof}

\subsubsection{Large segments all intersecting a common ball}

\begin{lemma}
    \lemlab{large:segments}%
    Let $\ball$ be a ball of radius $r$, and let $\Lines$ be a set of
    segments both in $\Re^d$.  Furthermore, assume that (i) $\Lines$
    is $\cSD$-exposed, (ii) all the segments of $\Lines$ intersect
    $\ball$, and (iii) they are all of length $\geq r$. Then, we have
    $\cardin{\Lines} = O\pth{1/\cSD^{2d+2}}$.
\end{lemma}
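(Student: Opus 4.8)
The plan is to reduce the statement to the concurrent case already handled in \lemref{segs:through:p}, via a covering-and-translation argument whose only new ingredient is hypothesis (iii). First I would cover the ball $\ball$ (of radius $r$) by a family of $N = O\pth{1/\cSD^{d}}$ smaller balls of radius $\rho = \cSD r/8$, centered at points $\cen_1, \ldots, \cen_N$; such a cover exists because the ratio of radii is $O(1/\cSD)$, and in fixed dimension $d$ the $8^d$ factor is absorbed into the constant. Every $\seg \in \Lines$ meets $\ball$, so it meets at least one of these small balls; assign $\seg$ to one such ball and let $\Lines_j \subseteq \Lines$ be the segments assigned to the ball centered at $\cen_j$. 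Since $\Lines = \bigcup_j \Lines_j$ and there are only $N = O\pth{1/\cSD^d}$ parts, it suffices to prove $\cardin{\Lines_j} = O\pth{1/\cSD^{d+2}}$ for every $j$; summing over $j$ then gives $\cardin{\Lines} = O\pth{1/\cSD^{2d+2}}$.

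To bound a single $\Lines_j$, I would \emph{snap} each of its segments so that they all pass through the center $\cen_j$. For $\seg \in \Lines_j$ pick a point $\pnt_\seg \in \seg$ lying in the ball centered at $\cen_j$, and let $\widetilde{\seg} = \seg + (\cen_j - \pnt_\seg)$ be the translate carrying $\pnt_\seg$ to $\cen_j$. Then $\cen_j \in \widetilde{\seg}$ for every $\seg$, so the translated family is concurrent and satisfies hypothesis (ii) of \lemref{segs:through:p}. The crucial observation is that this translation, whose magnitude is at most $\rho = \cSD r/8$, barely disturbs the exposedness: translation preserves each segment's length, and since all lengths are $\geq r$ we have $\rho \leq \cSD \lenX{\seg'}/8$ for every segment $\seg'$. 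A short triangle-inequality computation then shows that if $\widetilde{\seg}$ $(3\cSD/4)$-shadows $\widetilde{\seg'}$, then $\seg$ $\cSD$-shadows $\seg'$: for any $x' \in \seg'$ the translated point $x' + (\cen_j - \pnt_{\seg'}) \in \widetilde{\seg'}$ is within $(3\cSD/4)\lenX{\seg'}$ of $\widetilde{\seg}$, and undoing the two translation vectors adds at most $2\rho \leq \cSD\lenX{\seg'}/4$, giving $\distY{x'}{\seg} \leq \cSD \lenX{\seg'}$ for all $x' \in \seg'$.

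Taking the contrapositive, the $\cSD$-exposedness of $\Lines_j$ forces the translated family to be $(3\cSD/4)$-exposed (in particular two distinct segments cannot snap to the same segment, since coincidence would be a $0$-shadowing and hence a $\cSD$-shadowing of the originals, so the snapping is injective on $\Lines_j$). Now \lemref{segs:through:p} applies to the concurrent, $(3\cSD/4)$-exposed translated family and bounds its cardinality by $O\pth{1/(3\cSD/4)^{d+2}} = O\pth{1/\cSD^{d+2}}$, which by injectivity bounds $\cardin{\Lines_j}$ as well; summing over the $O\pth{1/\cSD^d}$ balls yields the claim. I expect the main obstacle to be exactly this exposedness-preservation step: one must check that converting "all segments meet a small ball" into "all segments share a point" degrades the exposedness parameter by only a constant factor, and this is precisely where hypothesis (iii) is indispensable, since it guarantees the snapping displacement $\rho$ is a small fraction of every segment length.
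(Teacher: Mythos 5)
Your proposal is correct and follows essentially the same route as the paper's proof: cover $\ball$ by $O\pth{1/\cSD^{d}}$ balls of radius $\Theta(\cSD r)$, translate each segment to pass through the center of a small ball it meets, verify (via the triangle inequality and hypothesis (iii)) that exposedness degrades only by a constant factor, and then apply \lemref{segs:through:p} to each concurrent subfamily. The only differences are cosmetic choices of constants ($\cSD r/8$ versus $\cSD r/4$, $3\cSD/4$ versus $\cSD/2$) and that you spell out the perturbation estimate the paper leaves as ``easily verified.''
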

\begin{proof}
    Let $\BallSet$ be a set of $O\pth{\cSD^{-e}}$ balls of radius
    $\cSD r/4$, that cover $\ball$.  For each $\seg \in \Lines$, pick
    a small ball $\ball_{\seg} \in \BallSet$ intersecting $\seg$, and
    translate $\seg$ by at most $\cSD r/4$ so that it passes through
    the center of $\ball_{\seg}$. For $\seg \in \Lines$, let $\seg'$
    denote the translated segment, and let
    $\Lines' = \Set{\seg'}{\seg \in \Lines}$.

    Since $\Lines$ is $\cSD$-exposed, and the length of each segment
    of $\Lines$ is at least $r$, it follows that $\Lines'$ is
    $\cSD/2$-exposed, as can be easily verified.

    Now, for every ball $\ballY{\cen}{\cSD r/4} \in \BallSet$,
    consider the segment of segments $\Lines'(\cen)$ that passes
    through $\cen$. By \lemref{segs:through:p}, we have
    $\cardin{\Lines'(\cen)} = O\pth{1/\cSD^{d+2}}$. This implies that
    \begin{math}
        \cardin{\Lines} = \cardin{\Lines'}%
        =%
        O\pth{\cardin{\BallSet} /\cSD^{d+2} }%
        =%
        O\pth{1 /\cSD^{2d+2} }.
    \end{math}
\end{proof}

\subsubsection{Putting things together}

\begin{lemma}
    Let $\Lines$ be a set of segments in $\Re^d$ and $\cSD > 0$ a
    fixed parameter. If $\Lines$ is $\cSD$-exposed, then $\Lines$ has
    density $O(\cSD^{-2d-2})$.
\end{lemma}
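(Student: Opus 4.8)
The plan is to recognize that the final statement is essentially an immediate corollary of \lemref{large:segments}, once the definition of density is unfolded. Recall that a set of objects has density $\rho$ if every ball intersects at most $\rho$ objects whose diameter is at least the radius of that ball. Hence to bound the density of $\Lines$ it suffices to bound, uniformly over every ball, the number of segments meeting it whose length is at least its radius.

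First I would fix an arbitrary ball $\ball$ of radius $r$, and restrict attention to the subset $\Lines_\ball \subseteq \Lines$ consisting of those segments that (a) intersect $\ball$, and (b) have length at least $r$. Since the diameter of a segment equals its length, the set $\Lines_\ball$ is exactly the collection counted against $\ball$ in the density, so it is enough to show $\cardin{\Lines_\ball} = O\pth{\cSD^{-2d-2}}$ with a bound independent of the choice of $\ball$.

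Next I would observe that $\cSD$-exposedness is a hereditary property: if no segment of $\Lines$ $\cSD$-shadows another, then a fortiori the same holds in any subset, so $\Lines_\ball$ is again $\cSD$-exposed. At this point all the hypotheses of \lemref{large:segments} hold for $\Lines_\ball$ with respect to the ball $\ball$ of radius $r$ --- the set is $\cSD$-exposed, every segment intersects $\ball$, and every segment has length $\geq r$ --- so that lemma gives $\cardin{\Lines_\ball} = O\pth{1/\cSD^{2d+2}}$. Since $\ball$ was arbitrary, this is precisely the asserted density bound.

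There is essentially no obstacle here: the substantive geometric work has already been carried out in \lemref{large:segments}, and the remaining steps are only the (trivial) hereditary property of exposedness and the identification of a segment's diameter with its length. The one point worth stating explicitly for the reader is the unfolding of the definition of density, so that the reduction to a single ball of the appropriate radius is transparent.
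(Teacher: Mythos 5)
Your proposal is correct and follows essentially the same route as the paper: both unfold the definition of density, fix an arbitrary ball, note that $\cSD$-exposedness is hereditary, and invoke \lemref{large:segments} to bound the number of long segments meeting the ball. The only cosmetic difference is the length threshold ($\geq r$ versus the paper's $\geq 2r$), which is immaterial since \lemref{large:segments} is stated for length $\geq r$.
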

\begin{proof}
    Consider any ball $\ballY{\cen}{r}$ in $\Re^d$. By
    \lemref{large:segments}, there could be at most $O(\cSD^{-2d-2})$
    segments of length $\geq 2r$ of $\Lines$ intersecting it, and the
    result follows.
\end{proof}

\subsection{On $(\cSD, k)$-shadowing}

A set of objects $\ObjSet$ in $\Re^d$ is \emphi{$(\cSD,\kSD)$-exposed}
if each object $\obj \in \ObjSet$ is $\cSD$-shadowed by at most $k$
other objects in $\ObjSet$.
\begin{lemma}
    Let $\cSD > 0$ be a fixed parameter and $\ObjSetA$ a set of
    objects, such that for any subset $\ObjSetB \subseteq \ObjSetA$
    that is $\cSD$-exposed, we have that
    \begin{math}
        \densityOp(\ObjSetB) \leq \cDensity.
    \end{math}
    If $\ObjSetA$ is $(\cSD, k)$-exposed, then
    $\densityOp(\ObjSetA) \leq (2k + 1) \cDensity$.
\end{lemma}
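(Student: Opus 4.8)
The plan is to reduce the $(\cSD, k)$-exposed case to the $\cSD$-exposed case by a graph-coloring argument, and then invoke subadditivity of density. First I would form the \emph{shadowing graph} $\graph$ on the vertex set $\ObjSetA$, joining two objects $\obj, \objA \in \ObjSetA$ by an edge whenever one of them $\cSD$-shadows the other. By the definition of $\cSD$-exposed, a subset of $\ObjSetA$ is $\cSD$-exposed precisely when it is an independent set of $\graph$; hence any proper coloring of $\graph$ partitions $\ObjSetA$ into $\cSD$-exposed color classes.

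It remains to bound the number of colors. Orient each edge of $\graph$ from the shadowing object toward the shadowed one (if both shadow each other, either orientation is valid, since each endpoint is then shadowed by the other). The assumption that $\ObjSetA$ is $(\cSD,k)$-exposed states exactly that every object is shadowed by at most $k$ others, so in this orientation every vertex has in-degree at most $k$. The crucial observation is that this in-degree bound is inherited by every induced subgraph: for any set $S$ of objects, the number of edges inside $S$ is at most the sum of the in-degrees of its vertices, hence at most $k\cardin{S}$. Therefore the subgraph on $S$ has average degree at most $2k$, so it contains a vertex of degree at most $2k$. This makes $\graph$ a $2k$-degenerate graph, and the standard greedy procedure (repeatedly delete a minimum-degree vertex, then color in reverse order) produces a proper coloring with at most $2k+1$ colors.

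Finally I would assemble the pieces. The coloring yields $\cSD$-exposed subsets $\ObjSetB_1, \ldots, \ObjSetB_{2k+1}$ covering $\ObjSetA$, and by hypothesis $\densityX{\ObjSetB_i} \leq \cDensity$ for each $i$. Since density is subadditive under unions — for any ball, the objects of $\ObjSetA$ contributing to its density count are distributed among the color classes, so that count is at most the sum of the per-class counts — we obtain $\densityX{\ObjSetA} \leq \sum_{i=1}^{2k+1} \densityX{\ObjSetB_i} \leq (2k+1)\cDensity$, as desired.

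I expect the main subtlety to be the degeneracy step: one must notice that the in-degree constraint passes to every induced subgraph, which is what turns the edge-count bound into $2k$-degeneracy and thus $(2k+1)$-colorability. (Merely bounding the maximum degree of $\graph$ would give a much weaker color bound and would not match the target factor $2k+1$.) The reduction of $\cSD$-exposedness to independent sets and the subadditivity of density are then routine.
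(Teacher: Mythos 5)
Your proof is correct and follows essentially the same route as the paper: build the shadowing graph, observe it is $2k$-degenerate and hence $(2k+1)$-colorable into $\cSD$-exposed independent sets, and conclude by subadditivity of density. In fact you are slightly more careful than the paper, which jumps from ``average degree at most $2k$'' to ``$2k$-degenerate''; your observation that the in-degree bound is hereditary to induced subgraphs is exactly the detail needed to make that step rigorous.
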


\begin{proof}
    We create a graph $\Graph$ over $\ObjSetA$, with an edge between
    two objects $\objA, \objB \in \ObjSetA$ if one shadows the
    other. By assumption, the average degree in $\Graph$ is bounded by
    $2k$, and in particular the graph is $2k$-degenerate and can be
    partitioned into $2k+1$ independent sets. Every independent set is
    $\cSD$-exposed, and by assumption has density $\leq \cDensity$.
    Since density is subadditive under unions, $\ObjSetA$ has density
    at most $(2k+1) \cDensity$.
\end{proof}

\begin{corollary}
    Let $\Lines$ be a set of segments in $\Re^d$ that
    $(\cSD, \kSD)$-exposed.  Then $\Lines$ has density
    $O\pth{k \cSD^{-4}}$.
\end{corollary}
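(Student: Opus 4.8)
The plan is to derive the corollary from the two lemmas that immediately precede it, so that no new geometry is required. The density-transfer lemma (the penultimate lemma) says that if every $\cSD$-exposed subfamily of a family of objects has density at most $\cDensity$, then that family, once it is merely $(\cSD,\kSD)$-exposed, has density at most $(2\kSD+1)\cDensity$. Thus the entire task reduces to exhibiting a single constant $\cDensity$ bounding the density of \emph{every} $\cSD$-exposed subfamily of $\Lines$; the factor $(2\kSD+1)$ is then supplied for free by the transfer lemma.

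First I would record the trivial but essential remark that being $\cSD$-exposed is inherited by subfamilies: if $\Lines' \subseteq \Lines$, then $\Lines'$ is again a set of segments in $\Re^d$, and whenever $\Lines'$ is $\cSD$-exposed the preceding ``putting things together'' lemma applies to it verbatim, giving $\densityOp\pth{\Lines'} = O\pth{\cSD^{-2d-2}}$. Hence the hypothesis of the transfer lemma is met with $\cDensity = O\pth{\cSD^{-2d-2}}$, and substituting this in yields
\[
   \densityOp\pth{\Lines} \;\le\; (2\kSD+1)\,\cDensity \;=\; O\pth{\kSD\,\cSD^{-2d-2}} .
\]

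The step I expect to be the real sticking point is matching the \emph{exponent} to the form claimed in the statement. The argument above cleanly produces $O\pth{\kSD\,\cSD^{-2d-2}}$, and this cannot be pushed down to the dimension-free $O\pth{\kSD\,\cSD^{-4}}$ for general $d$. To see the obstruction, take $\Theta\pth{\cSD^{-(d-1)}}$ equal-length segments through the origin whose directions are pairwise separated by more than $\arcsin(2\cSD)$; a one-line computation shows that the far endpoint of any such segment lies at distance more than $\cSD$ times its length from any other segment, so no segment $\cSD$-shadows another and the family is $\cSD$-exposed, yet all of them meet an arbitrarily small ball about the origin, forcing density $\Omega\pth{\cSD^{-(d-1)}}$, which already exceeds $\cSD^{-4}$ once $d \ge 6$. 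Consequently the literal $\cSD^{-4}$ is exactly the $d=1$ specialization of $\cSD^{-2d-2}$, and the bound I would actually establish is $\densityOp\pth{\Lines} = O\pth{\kSD\,\cSD^{-2d-2}}$, with the dimension-dependence of the exponent being unavoidable rather than an artifact of the reduction.
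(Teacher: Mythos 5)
Your derivation is exactly the intended one: the paper gives no separate proof of this corollary, which is meant to follow by feeding the bound $O\pth{\cSD^{-2d-2}}$ of the preceding ``putting things together'' lemma (which, as you observe, applies verbatim to every $\cSD$-exposed subfamily of $\Lines$) into the $(2\kSD+1)$-factor transfer lemma. The bound you actually obtain, $O\pth{\kSD\, \cSD^{-2d-2}}$, is correct, and you are also right that the exponent printed in the corollary cannot be achieved for general $d$: your construction of $\Theta\pth{\cSD^{-(d-1)}}$ equal-length concurrent segments whose directions are pairwise separated by angle more than $\arcsin(2\cSD)$ is indeed $\cSD$-exposed (hence $(\cSD,\kSD)$-exposed for every $\kSD \geq 0$), all of its members meet an arbitrarily small ball about the common point, and so its density is $\Omega\pth{\cSD^{-(d-1)}}$, which exceeds $\cSD^{-4}$ once $d \geq 6$. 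The $\cSD^{-4}$ in the statement is evidently a slip --- it is what $\cSD^{-2d-2}$ becomes at $d=1$, i.e., the interval case --- and the corollary should read $O\pth{\kSD\, \cSD^{-2d-2}}$, exactly as your argument establishes.
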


\BibTexMode{%
}

\BibLatexMode{\printbibliography}

\end{document}